\documentclass[11pt,a4paper]{article}

\usepackage[a4paper,margin=1in]{geometry}
\usepackage[T1]{fontenc}
\usepackage[utf8]{inputenc}
\usepackage{lmodern}
\usepackage{amsmath,amssymb,amsthm,mathtools,bm}
\usepackage{graphicx}
\usepackage{booktabs}
\usepackage{hyperref}
\usepackage{tikz}
\usepackage{pgfplots}
\usepackage{caption}
\usepackage{enumitem}

\pgfplotsset{compat=1.18}
\hypersetup{
  colorlinks=true,
  linkcolor=blue,
  citecolor=blue,
  urlcolor=blue
}

\newcommand{\dd}{\mathrm{d}}
\newcommand{\ee}{\mathrm{e}}
\newcommand{\ii}{\mathrm{i}}
\newcommand{\Res}{\mathop{\mathrm{Res}}}

\newcommand{\cH}{\mathcal{H}}
\newcommand{\cM}{\mathcal{M}}
\newcommand{\cO}{\mathcal{O}}
\newcommand{\cW}{\mathcal{W}}

\newcommand{\SH}{S_{\mathrm{H}}}
\newcommand{\SA}{S_{A}}
\newcommand{\Ff}{F}
\newcommand{\diS}{d_{i}S}
\newcommand{\dtotS}{dS_{\mathrm{tot}}}
\newcommand{\dmS}{dS_{m}}

\newcommand{\TempH}{T_{h}}
\newcommand{\Ahor}{A_{h}}
\newcommand{\VH}{V_{h}}
\newcommand{\EH}{E_{h}}

\newcommand{\rA}{\tilde r_{A}}
\newcommand{\TA}{T_{A}}
\newcommand{\EA}{E_{A}}

\theoremstyle{definition}
\newtheorem{definition}{Definition}[section]
\newtheorem{remark}{Remark}[section]
\newtheorem{proposition}{Proposition}[section]

\title{Non-Equilibrium Physics of Thermodynamicized Black Holes}
\author{Wen-Xiang Chen\\Guangzhou City University of Technology\\School of Physics and Materials Science, Guangzhou University\\wxchen4277@qq.com}
\date{\today}

\begin{document}
\maketitle

\begin{abstract}
Motivated by the entropy-functional viewpoint of emergent gravity and by recent residue-based constructions of black-hole thermodynamics, we formulate a non-equilibrium framework for thermodynamicized black holes. The central idea is to combine three ingredients: an entropy-functional criterion that selects on-shell backgrounds, the Euclidean/contour representation of horizon temperature through simple-pole singularities of the lapse function, and a topological residue classification of multi-horizon configurations. Building on these ingredients, we introduce a quasi-stationary non-equilibrium partition functional in which irreversible entropy production appears as an additive contribution to the singular action. The resulting formalism preserves the equilibrium relations in the adiabatic limit while extending them to driven black-hole configurations with matter/charge/rotation fluxes. We then apply the construction to Kerr--Newman-type black holes in constant-curvature \(f(R)\) gravity and show that the equilibrium entropy is still weighted by \(f'(R_0)\), whereas non-equilibrium corrections enter through flux-induced deformations of the effective thermodynamic action. The outer and inner horizons contribute opposite topological orientations, so the non-extremal Kerr--Newman family remains in the class \(W=0\) unless a horizon bifurcation or merger changes the singularity structure. Finally, we provide several explicit function plots---including an equilibrium/non-equilibrium free-energy model, a Kerr--Newman temperature curve, and an entropy-production law.

Keywords: thermodynamicized gravity; non-equilibrium black holes; \(f(R)\) gravity; Kerr--Newman black holes; horizon thermodynamics; Wald entropy; residue formalism; topological classification
\end{abstract}

\section{Introduction}

Black-hole thermodynamics provides one of the most suggestive bridges between gravitation, statistical physics, and quantum theory \cite{Bekenstein1973,Hawking1975}. The Bekenstein--Hawking entropy and Hawking temperature imply that horizons behave as thermodynamic objects rather than merely as causal boundaries \cite{Bekenstein1973,Hawking1975}. At the microscopic level, string-theoretic and near-horizon state counting arguments further support this thermodynamic interpretation \cite{StromingerVafa1996,Strominger1998}. In parallel, thermodynamic geometry and black-hole microstructure analyses have provided an additional phenomenological language for discussing interaction patterns and phase structure \cite{Ruppeiner1995,WeiLiuMann2019}. In the broader emergent-gravity program, gravitational field equations may be interpreted as macroscopic equations of state, and the horizon sector is often treated as the natural carrier of thermodynamic information \cite{Jacobson1995,PadmanabhanParanjape2007,Padmanabhan2010}. At the same time, Euclidean quantum gravity and complex-analysis techniques indicate that horizon thermodynamics can be encoded in the singularity structure of analytically continued partition functions \cite{GibbonsHawking1977,HawkingPage1983,York1986,BradenBrownWhitingYork1990}.

The two uploaded preprints motivating the present manuscript emphasize complementary aspects of this program \cite{ChenTopological2025,ChenSingular2026}. The first develops a topological complex-analysis framework for Kerr--Newman black holes in \(f(R)\) gravity, where microstructure data are mapped to singularities on a complexified partition function and classified by a winding-number-like topological index \cite{ChenTopological2025}. The second organizes gravitational thermodynamics into canonical and grand-canonical singular ensembles, selected by an entropy-functional equilibrium condition and evaluated through contour integrals around simple poles of the lapse function \cite{ChenSingular2026}. These ingredients naturally suggest a further step: a non-equilibrium extension in which driven black-hole configurations are described by a generalized singular action that contains both reversible and irreversible contributions.

The goal of this paper is therefore modest but precise. We do not claim a new microscopic theory of quantum gravity. Instead, we construct a mathematically explicit and phenomenologically useful non-equilibrium framework for thermodynamicized black holes by synthesizing the entropy-functional language, the residue-temperature formula, and the topological classification of multi-horizon states. The formalism is designed to satisfy the following requirements:
\begin{enumerate}[label=(\roman*)]
    \item it must reduce to the usual equilibrium relations in the quasi-static limit;
    \item it must preserve the contour/residue interpretation of horizon temperature;
    \item it must accommodate fluxes of energy, angular momentum, charge, and effective particle number;
    \item it must clarify when the topological index changes and when it remains protected.
\end{enumerate}

The relativistic interpretation of redshifted intensive variables also remains important throughout, especially when one introduces chemical-potential-type quantities and quasi-local thermodynamic sectors \cite{Tolman1930,Klein1949}. Meanwhile, the topological viewpoint adopted here is consistent with recent developments in black-hole thermodynamic topology \cite{YerraBhamidipati2022,ChenTopological2025}.

The paper is organized as follows. Section~2 reviews the entropy-functional selection rule, the residue formula for horizon temperature, and the \(f(R)\)-corrected Kerr--Newman background. Section~3 introduces a quasi-stationary non-equilibrium partition functional and derives the corresponding entropy-balance law. Section~4 applies the construction to Kerr--Newman-type black holes in constant-curvature \(f(R)\) gravity and discusses the topological index of outer/inner horizon branches. Section~5 presents explicit function plots that visualize the effective free energy, the Kerr--Newman temperature curve, and the irreversible entropy production. Sections~6 and 7 develop the transition from \(f(R)\) gravity to a gravity-thermodynamized model and then illustrate the formalism through static spherically symmetric and FRW examples. Section~8 summarizes the main conclusions and limitations.

\section{Thermodynamicized Gravity Background}

\subsection{Entropy-functional selection of on-shell backgrounds}

Following the entropy-functional viewpoint of emergent gravity \cite{Jacobson1995,PadmanabhanParanjape2007,Padmanabhan2010,ChenSingular2026}, we consider an auxiliary vector field \(\xi^\mu\) associated with local horizon generators and define the functional
\begin{equation}
S_{\xi}[g,\xi] = \int_{V} \nabla_{\mu}\xi_{\nu}\,\nabla^{\mu}\xi^{\nu}\,\sqrt{-g}\,\dd^{4}x.
\label{eq:Sxi}
\end{equation}
Varying \(\xi^\mu\) while holding the metric fixed yields
\begin{equation}
\delta S_{\xi} = -2\int_{V}(\Box\xi_{\nu})\,\delta\xi^{\nu}\,\sqrt{-g}\,\dd^{4}x,
\label{eq:deltaSxi}
\end{equation}
so the stationary condition is
\begin{equation}
\Box\xi_{\nu}=0.
\label{eq:boxxi}
\end{equation}
If \(\xi^\mu\) is identified with a local Killing generator, then the standard identity
\begin{equation}
\Box\xi_{\nu}=-R_{\nu\lambda}\xi^{\lambda}
\label{eq:killingid}
\end{equation}
implies
\begin{equation}
R_{\nu\lambda}\xi^{\lambda}=0.
\label{eq:Rnulambda}
\end{equation}
Contracting with \(\xi^{\nu}\) gives the local equilibrium condition
\begin{equation}
R_{\mu\nu}\xi^{\mu}\xi^{\nu}=0.
\label{eq:nullconstraint}
\end{equation}
Equation~\eqref{eq:nullconstraint} is not the full field equation; rather, it selects the stationary or quasi-stationary backgrounds on which the thermodynamic description is meaningful \cite{Jacobson1995,PadmanabhanParanjape2007,Padmanabhan2010}.

\subsection{Residue representation of the horizon temperature}

For a static Euclideanized line element of the form
\begin{equation}
ds_{E}^{2}=f(r)\dd\tau^{2}+\frac{\dd r^{2}}{f(r)}+r^{2}\dd\Omega_{2}^{2},
\label{eq:Euclideanmetric}
\end{equation}
assume that the lapse function has a simple zero at \(r=r_{h}\):
\begin{equation}
f(r)=f'(r_{h})(r-r_{h})+\mathcal{O}\big((r-r_{h})^{2}\big).
\label{eq:simplezero}
\end{equation}
Regularity of the \((\tau,r)\)-plane implies
\begin{equation}
\beta_{h}=\frac{4\pi}{f'(r_{h})}.
\label{eq:betafh}
\end{equation}
Since
\begin{equation}
\Res_{r=r_{h}}\frac{1}{f(r)}=\frac{1}{f'(r_{h})},
\label{eq:residuebasic}
\end{equation}
we obtain the contour-temperature formula
\begin{equation}
\beta_{h}=4\pi\,\Res_{r=r_{h}}\frac{1}{f(r)} = \frac{2}{\ii}\oint_{\Gamma_{h}}\frac{\dd r}{f(r)}.
\label{eq:contourtemperature}
\end{equation}
This is the key technical bridge between horizon geometry and thermodynamic data \cite{GibbonsHawking1977,York1986,HawkingPage1983,ChenSingular2026}.

\subsection{Constant-curvature \texorpdfstring{$f(R)$}{f(R)} gravity and Kerr--Newman geometry}

In \(f(R)\) gravity the action is
\begin{equation}
I = \frac{1}{16\pi G}\int \dd^{4}x\,\sqrt{-g}\,f(R),
\label{eq:fRaction}
\end{equation}
and the field equations are
\begin{equation}
f'(R)R_{\mu\nu}-\frac{1}{2}f(R)g_{\mu\nu}+\left(g_{\mu\nu}\Box-\nabla_{\mu}\nabla_{\nu}\right)f'(R)=0.
\label{eq:fRfield}
\end{equation}
For constant curvature \(R=R_{0}\), the vacuum condition becomes
\begin{equation}
f'(R_{0})R_{0}=2f(R_{0}).
\label{eq:constantcurvature}
\end{equation}
A Kerr--Newman-type solution in this sector is characterized by
\begin{equation}
\Delta(r)=r^{2}+a^{2}-\frac{2Mr}{f'(R_{0})}+\frac{Q^{2}}{f'(R_{0})}-\frac{R_{0}}{12}(r^{2}+a^{2})r^{2}.
\label{eq:DeltafR}
\end{equation}
When \(R_{0}=0\) and \(f'(R_{0})=1\), one recovers the standard asymptotically flat Kerr--Newman case \cite{ChenTopological2025}. The outer and inner horizon radii are determined by \(\Delta(r_{\pm})=0\). The horizon entropy is weighted by the effective coupling:
\begin{equation}
S_{+}=\frac{f'(R_{0})A_{+}}{4G},
\label{eq:fRentropy}
\end{equation}
with \(A_{+}=4\pi(r_{+}^{2}+a^{2})\), in agreement with the Wald entropy formula \cite{Wald1993}. For the asymptotically flat case,
\begin{equation}
T_{+}=\frac{r_{+}-r_{-}}{4\pi(r_{+}^{2}+a^{2})},
\label{eq:KNtemperature}
\end{equation}
whereas the formal inner-horizon temperature is
\begin{equation}
T_{-}=\frac{r_{-}-r_{+}}{4\pi(r_{-}^{2}+a^{2})}.
\label{eq:innerT}
\end{equation}
The sign of \(T_{-}\) already signals the instability of the inner branch \cite{ChenTopological2025}.

\section{Quasi-Stationary Non-Equilibrium Extension}

\subsection{Generalized singular action}

In equilibrium, the singular contribution to the thermodynamic action is encoded by the contour integral of the appropriate thermodynamic observable divided by the lapse function \cite{GibbonsHawking1977,York1986,ChenSingular2026}. To extend this idea to non-equilibrium but slowly driven configurations, we introduce an evolution parameter \(\lambda\) and define a generalized singular action
\begin{equation}
\mathcal{I}_{\mathrm{sing}}^{\mathrm{neq}}[\lambda]
= \frac{2}{\ii}\oint_{\Gamma_{h}(\lambda)}\frac{\mathcal{X}(r,\lambda)}{f(r,\lambda)}\,\dd r,
\label{eq:Isingneq}
\end{equation}
with
\begin{equation}
\mathcal{X}(r,\lambda)=E(\lambda)-\Omega_{h}(\lambda)J(\lambda)-\Phi_{h}(\lambda)Q(\lambda)-\mu_{h}(\lambda)N(\lambda)+\Pi_{\mathrm{eff}}(r,\lambda).
\label{eq:Xdef}
\end{equation}
Here \(\Pi_{\mathrm{eff}}\) represents the irreversible sector, interpreted as the local source of entropy production or dissipative work. If \(\Pi_{\mathrm{eff}}\to 0\), the equilibrium singular action is recovered.

When \(\mathcal{X}(r,\lambda)\) is regular at the horizon, Eq.~\eqref{eq:Isingneq} reduces to
\begin{equation}
\mathcal{I}_{\mathrm{sing}}^{\mathrm{neq}} = \beta_{h}\left(E-\Omega_{h}J-\Phi_{h}Q-\mu_{h}N+\Pi_{h}\right),
\label{eq:Isingneqsimple}
\end{equation}
where \(\Pi_{h}=\Pi_{\mathrm{eff}}(r_{h},\lambda)\) and \(\beta_{h}\) is still determined by the residue formula \eqref{eq:contourtemperature}.

\subsection{Non-equilibrium partition functional and entropy balance}

A natural quasi-stationary partition functional is therefore
\begin{equation}
\mathcal{Z}_{\mathrm{neq}}\sim \exp\left[S_{h}-\mathcal{I}_{\mathrm{sing}}^{\mathrm{neq}}\right].
\label{eq:Zneq}
\end{equation}
Equivalently,
\begin{equation}
\ln \mathcal{Z}_{\mathrm{neq}}
= S_{h}-\beta_{h}\left(E-\Omega_{h}J-\Phi_{h}Q-\mu_{h}N\right)-\beta_{h}\Pi_{h}.
\label{eq:lnZneq}
\end{equation}
To separate reversible and irreversible parts, we write the entropy balance law as
\begin{equation}
\frac{\dd S_{h}}{\dd\lambda}
= \frac{1}{T_{h}}\left(\frac{\dd E}{\dd\lambda}-\Omega_{h}\frac{\dd J}{\dd\lambda}-\Phi_{h}\frac{\dd Q}{\dd\lambda}-\mu_{h}\frac{\dd N}{\dd\lambda}\right)+\dot{S}_{\mathrm{irr}},
\label{eq:entropybalance}
\end{equation}
with
\begin{equation}
\dot{S}_{\mathrm{irr}}\equiv \frac{\Pi_{h}}{T_{h}}\geq 0.
\label{eq:Sirrdef}
\end{equation}
Equation~\eqref{eq:entropybalance} is the non-equilibrium analogue of the first law. The first term is reversible, whereas \(\dot S_{\mathrm{irr}}\) captures dissipation, horizon viscosity, flux-induced mixing, or other irreversible channels.

A convenient integrated form is
\begin{equation}
\Delta S_{h}=\int_{\lambda_{i}}^{\lambda_{f}}\frac{1}{T_{h}}\left(\dd E-\Omega_{h}\dd J-\Phi_{h}\dd Q-\mu_{h}\dd N\right)+\int_{\lambda_{i}}^{\lambda_{f}}\dot{S}_{\mathrm{irr}}\,\dd\lambda.
\label{eq:DeltaS}
\end{equation}
This makes clear that the non-equilibrium framework is consistent with the second law whenever \(\dot S_{\mathrm{irr}}\ge0\).

\subsection{Adiabatic limit and canonical/grand-canonical sectors}

The formalism collapses to the equilibrium sectors discussed in the uploaded singular-ensemble paper when the driving vanishes \cite{ChenSingular2026}. In the canonical sector,
\begin{equation}
\mathcal{I}_{\mathrm{sing}}^{(A)}=\beta_{h}E,
\label{eq:canonicalequilibrium}
\end{equation}
whereas in the grand-canonical sector,
\begin{equation}
\mathcal{I}_{\mathrm{sing}}^{(B)}=\beta_{h}(E-\mu_{h}N).
\label{eq:grandcanonicalequilibrium}
\end{equation}
These sector choices are also naturally related to the Euclidean canonical and grand-canonical black-hole ensembles studied in the literature \cite{York1986,BradenBrownWhitingYork1990}. The rotating/charged black-hole extension simply augments the intensive variables to \((T_{h},\Omega_{h},\Phi_{h},\mu_{h})\). Thus the non-equilibrium proposal can be viewed as a flux-dressed version of the same residue calculus rather than as a qualitatively unrelated construction. The interpretation of local temperature and chemical potential in a gravitational field is also consistent with the classical relativistic equilibrium arguments of Tolman and Klein \cite{Tolman1930,Klein1949}.

\section{Application to Kerr--Newman-Type Black Holes in \texorpdfstring{$f(R)$}{f(R)} Gravity}

\subsection{Effective horizon thermodynamics}

For the asymptotically flat Kerr--Newman family, the horizon radii satisfy
\begin{equation}
r_{+}+r_{-}=2M,
\qquad
r_{+}r_{-}=a^{2}+Q^{2}.
\label{eq:rplusrminus}
\end{equation}
Hence the outer-horizon temperature can be written as a single-variable function,
\begin{equation}
T_{+}(r_{+};a,Q)=\frac{r_{+}-\dfrac{a^{2}+Q^{2}}{r_{+}}}{4\pi(r_{+}^{2}+a^{2})}.
\label{eq:Tplusonevar}
\end{equation}
The corresponding equilibrium entropy is
\begin{equation}
S_{+}(r_{+}) = \frac{\pi f'(R_{0})}{G}\left(r_{+}^{2}+a^{2}\right),
\label{eq:Splusr}
\end{equation}
again reflecting the Wald entropy weighting in \(f(R)\) gravity \cite{Wald1993,ChenTopological2025}. To model quasi-stationary non-equilibrium driving, we let the effective temperature be deformed as
\begin{equation}
T_{\mathrm{eff}}(r_{+};a,Q,\varepsilon)
= T_{+}(r_{+};a,Q)\left(1-\varepsilon \ee^{-r_{+}}\right),
\qquad 0\le \varepsilon<1,
\label{eq:Teff}
\end{equation}
where \(\varepsilon\) is a dimensionless non-equilibrium response parameter. Equation~\eqref{eq:Teff} should be understood as a phenomenological interpolation rather than an exact solution of the field equations; its purpose is to visualize how dissipative dressing lowers the effective horizon temperature while preserving the equilibrium limit \(\varepsilon\to0\).

The generalized non-equilibrium Gibbs functional is written as
\begin{equation}
\mathcal{G}_{\mathrm{neq}} = E-T_{\mathrm{eff}}S_{+}-\Omega_{+}J-\Phi_{+}Q+\mathcal{D},
\label{eq:Gneq}
\end{equation}
where \(\mathcal{D}\) denotes an explicitly dissipative correction. For an illustrative one-parameter model we choose
\begin{equation}
\widetilde{\mathcal{F}}_{\mathrm{eq}}(x)=\frac{1}{2}x-\pi\tau x^{2},
\qquad
\widetilde{\mathcal{F}}_{\mathrm{neq}}(x)=\frac{1}{2}x-\pi\tau x^{2}+\epsilon(x-x_{c})^{2},
\label{eq:Fmodels}
\end{equation}
with dimensionless horizon scale \(x\), control parameter \(\tau\), dissipative amplitude \(\epsilon\), and reference scale \(x_{c}\). This choice extends the equilibrium illustrative potential used in the uploaded singular-ensemble article \cite{ChenSingular2026}.

\subsection{Topological classification of horizon branches}

The topological interpretation of the residue structure follows from associating an orientation sign to each thermodynamic branch \cite{YerraBhamidipati2022,ChenTopological2025}. We define
\begin{equation}
W = \sum_{h}\sigma_{h},
\qquad
\sigma_{h}=\pm 1,
\label{eq:Wdef}
\end{equation}
where stable outer-horizon branches contribute \(+1\) and unstable inner-horizon branches contribute \(-1\). For a non-extremal Kerr--Newman black hole,
\begin{equation}
W=\sigma_{+}+\sigma_{-}=+1+(-1)=0.
\label{eq:Wzero}
\end{equation}
For a single-horizon or effectively single-branch configuration,
\begin{equation}
W=+1.
\label{eq:Wone}
\end{equation}
The crucial point is that non-equilibrium driving does not by itself change \(W\). The index changes only if the singularity structure changes qualitatively, for example through horizon merger, horizon creation, or annihilation. Therefore, small dissipative deformations preserve the topological class:
\begin{equation}
\delta W = 0
\qquad \text{for quasi-stationary deformations without horizon bifurcation.}
\label{eq:dWzero}
\end{equation}
This result gives a topological explanation of why the Kerr--Newman family remains in the \(W=0\) class under mild \(f(R)\) modifications and under weak non-equilibrium driving \cite{YerraBhamidipati2022,ChenTopological2025}.

\subsection{Non-equilibrium entropy production near the horizon}

A simple quadratic constitutive ansatz for irreversible production is
\begin{equation}
\dot S_{\mathrm{irr}} = \gamma \mathcal{J}^{2},
\qquad \gamma>0,
\label{eq:quadraticSirr}
\end{equation}
where \(\mathcal{J}\) is a dimensionless flux variable summarizing matter influx, shear, charge transport, or a combination thereof. Substituting Eq.~\eqref{eq:quadraticSirr} into Eq.~\eqref{eq:entropybalance} yields
\begin{equation}
\frac{\dd S_{h}}{\dd\lambda}
= \frac{1}{T_{h}}\left(\frac{\dd E}{\dd\lambda}-\Omega_{h}\frac{\dd J}{\dd\lambda}-\Phi_{h}\frac{\dd Q}{\dd\lambda}-\mu_{h}\frac{\dd N}{\dd\lambda}\right)+\gamma \mathcal{J}^{2}.
\label{eq:entropybalancequadratic}
\end{equation}
The last term is strictly non-negative and therefore compatible with the generalized second law. In a more microscopic treatment, \(\gamma\) should be derived from transport coefficients of the horizon fluid or from a Kubo-type relation, but the quadratic form already suffices to visualize the departure from equilibrium.

\section{Illustrative Function Plots}

The following plots are included directly in the manuscript so that the source file is self-contained. They are analytic illustrations of the formalism, not numerical solutions of the full \(f(R)\) field equations.

\subsection{Equilibrium and non-equilibrium effective free energy}

Figure~\ref{fig:freeenergy} shows the equilibrium model \(\widetilde{\mathcal{F}}_{\mathrm{eq}}\) and its dissipative deformation \(\widetilde{\mathcal{F}}_{\mathrm{neq}}\).

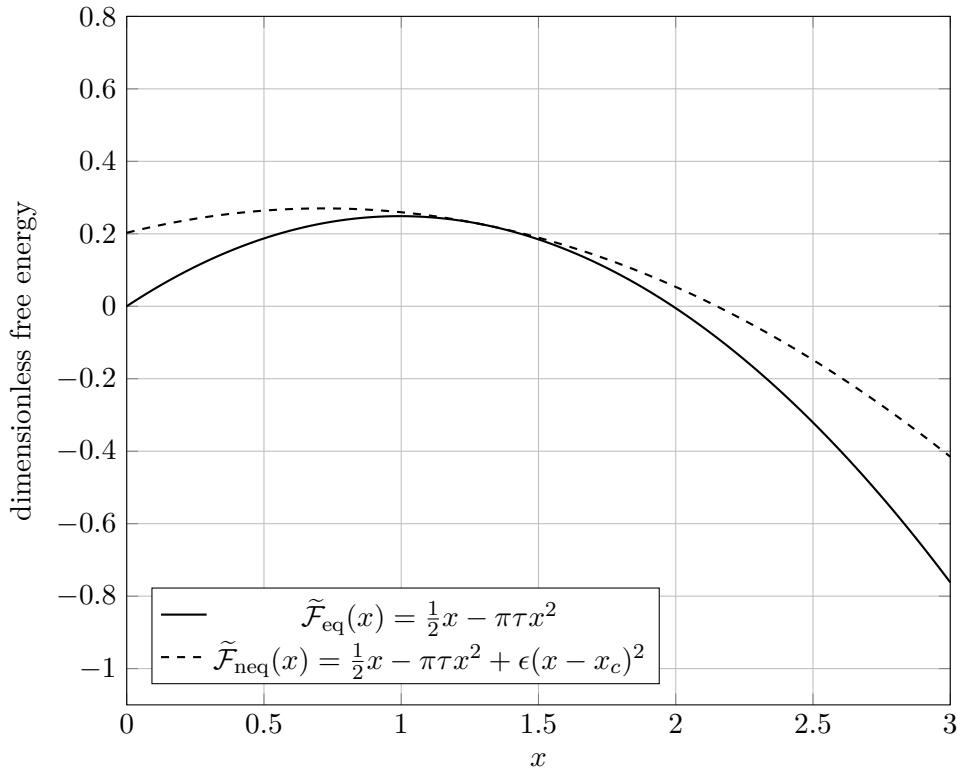
\begin{figure}[htbp]
\centering
\begin{tikzpicture}
\begin{axis}[
    width=0.78\textwidth,
    xlabel={$x$},
    ylabel={dimensionless free energy},
    xmin=0, xmax=3,
    ymin=-1.1, ymax=0.8,
    legend pos=south west,
    grid=both,
]
\addplot[smooth, thick, domain=0:3, samples=200] {0.5*x - pi*0.08*x^2};
\addlegendentry{$\widetilde{\mathcal{F}}_{\mathrm{eq}}(x)=\frac{1}{2}x-\pi\tau x^2$}
\addplot[smooth, thick, dashed, domain=0:3, samples=200] {0.5*x - pi*0.08*x^2 + 0.12*(x-1.3)^2};
\addlegendentry{$\widetilde{\mathcal{F}}_{\mathrm{neq}}(x)=\frac{1}{2}x-\pi\tau x^2+\epsilon(x-x_c)^2$}
\end{axis}
\end{tikzpicture}
\caption{Illustrative equilibrium and non-equilibrium effective free-energy branches for \(\tau=0.08\), \(\epsilon=0.12\), and \(x_c=1.3\). The dashed branch shows how dissipative driving lifts the effective potential relative to the equilibrium branch.}
\label{fig:freeenergy}
\end{figure}

\subsection{Kerr--Newman temperature curve}

Figure~\ref{fig:temperature} plots the equilibrium Kerr--Newman temperature from Eq.~\eqref{eq:Tplusonevar} together with the phenomenological non-equilibrium dressing \eqref{eq:Teff}.

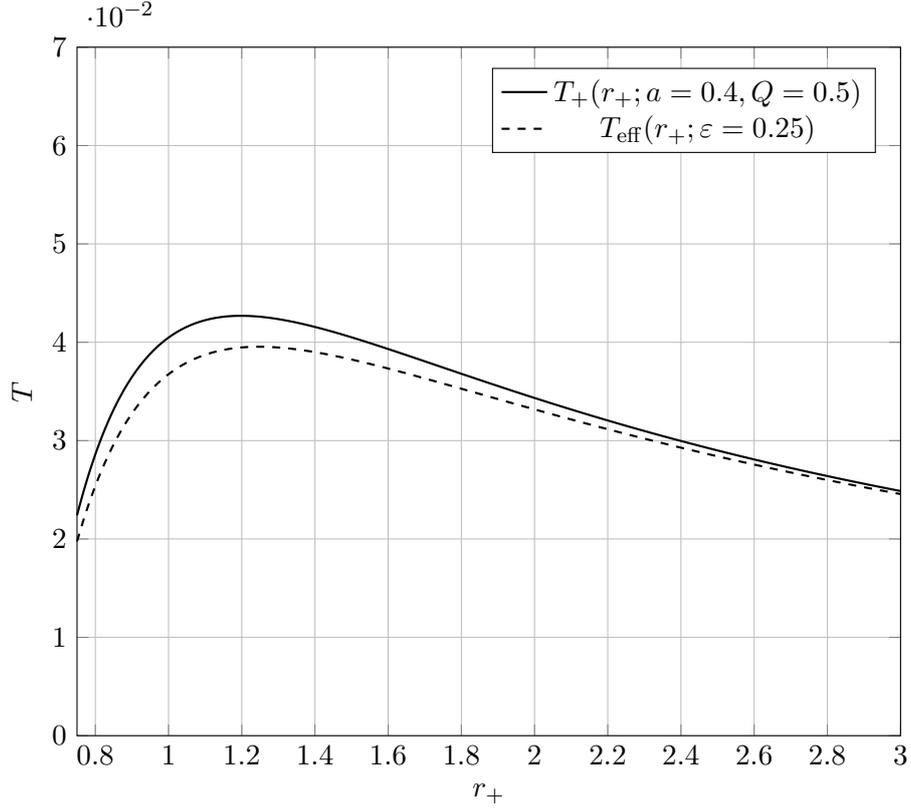
\begin{figure}[htbp]
\centering
\begin{tikzpicture}
\begin{axis}[
    width=0.78\textwidth,
    xlabel={$r_+$},
    ylabel={$T$},
    xmin=0.75, xmax=3,
    ymin=0, ymax=0.07,
    legend pos=north east,
    grid=both,
]
\addplot[smooth, thick, domain=0.75:3, samples=300] {((x - (0.5^2+0.4^2)/x)/(4*pi*(x^2+0.4^2)))};
\addlegendentry{$T_+(r_+;a=0.4,Q=0.5)$}
\addplot[smooth, thick, dashed, domain=0.75:3, samples=300] {((x - (0.5^2+0.4^2)/x)/(4*pi*(x^2+0.4^2)))*(1-0.25*exp(-x))};
\addlegendentry{$T_{\mathrm{eff}}(r_+;\varepsilon=0.25)$}
\end{axis}
\end{tikzpicture}
\caption{Outer-horizon Kerr--Newman temperature and a phenomenological non-equilibrium deformation. The dashed curve illustrates the suppression of the effective horizon temperature by weak dissipative dressing.}
\label{fig:temperature}
\end{figure}

\subsection{Irreversible entropy production}

Figure~\ref{fig:sirr} visualizes the quadratic entropy-production law \eqref{eq:quadraticSirr}.

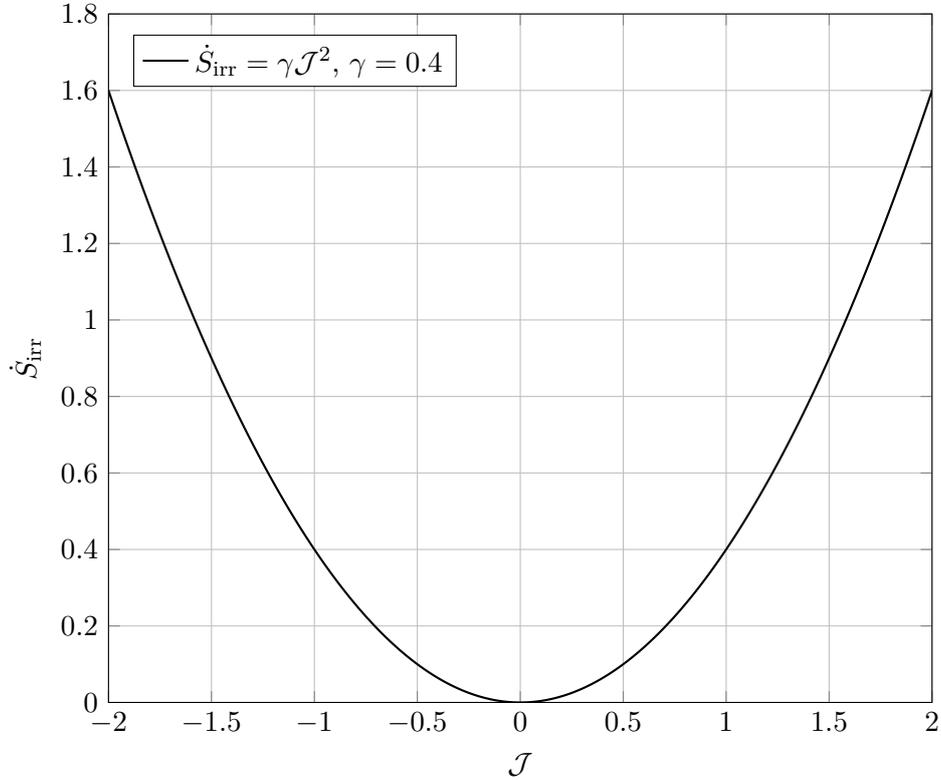
\begin{figure}[htbp]
\centering
\begin{tikzpicture}
\begin{axis}[
    width=0.78\textwidth,
    xlabel={$\mathcal{J}$},
    ylabel={$\dot S_{\mathrm{irr}}$},
    xmin=-2, xmax=2,
    ymin=0, ymax=1.8,
    legend pos=north west,
    grid=both,
]
\addplot[smooth, thick, domain=-2:2, samples=200] {0.4*x^2};
\addlegendentry{$\dot S_{\mathrm{irr}}=\gamma\mathcal{J}^2$, $\gamma=0.4$}
\end{axis}
\end{tikzpicture}
\caption{Quadratic irreversible entropy production as a function of a dimensionless flux variable \(\mathcal{J}\). The positivity of the curve encodes compatibility with the generalized second law.}
\label{fig:sirr}
\end{figure}

\section{From \texorpdfstring{$f(R)$}{f(R)} to a Gravity-Thermodynamized Model}

\subsection{Why one needs to upgrade from ``field equations'' to ``thermodynamization''}

The entropy-functional perspective indicates that if one requires the total entropy to be stationary, in an appropriate sense, for every local Rindler horizon, then the field equations may be regarded as the ``equation of state'' of spacetime \cite{Jacobson1995,PadmanabhanParanjape2007,Padmanabhan2010}. In the Einstein case, this idea is compatible with the local Clausius relation; however, in theories containing curvature corrections, a purely equilibrium relation is generally no longer sufficient, and one must introduce an additional entropy production term \cite{Jacobson1995,Padmanabhan2010}. For this reason, \(f(R)\) gravity is not as simple as mechanically replacing the area entropy \(A\) by \(FA\); rather, it naturally gives rise to a \emph{non-equilibrium horizon thermodynamics}. This line of thought is also closely aligned with the uploaded singular-ensemble framework \cite{ChenSingular2026}.

To systematize this point, we introduce the following definition.

\begin{definition}[Gravity-thermodynamized model]
Given a metric \(f(R)\) theory and a class of local or quasi-local horizons \(\cH\), we say that its \textbf{thermodynamization of gravity} is realized if:
\begin{enumerate}[label=(\roman*)]
\item the Wald entropy
\begin{equation}
\SH[\Sigma]=\frac{1}{4G}\int_{\Sigma}\Ff\,\dd A
\label{eq:WaldEntropy}
\end{equation}
is taken as the geometric entropy \cite{Wald1993};

\item the horizon heat flux \(\delta Q\), local temperature \(T\), reversible work terms, and internal entropy production \(\diS\) together form the complete entropy balance:
\begin{equation}
\dd S = \frac{\delta Q}{T}+\diS;
\label{eq:general_clausius}
\end{equation}

\item by redefining effective dark components, certain non-equilibrium descriptions are allowed to be rewritten into equilibrium ones;

\item \(\Ff\) itself is regarded as an internal thermodynamic variable, whose spatiotemporal variation drives irreversible effects.
\end{enumerate}
\end{definition}

\subsection{Local Rindler horizons, heat flux, and the variation of Wald entropy}

Consider a local Rindler horizon near a point \(p\), and let its null generator be
\(k^\mu=(\partial/\partial\lambda)^\mu\), where \(\lambda\) is an affine parameter. The near-horizon boost/Killing vector is taken as
\begin{equation}
\chi^\mu = -\kappa\lambda\,k^\mu,
\label{eq:boost_vector}
\end{equation}
where \(\kappa\) is the local surface gravity. The corresponding Unruh temperature is
\begin{equation}
T=\frac{\kappa}{2\pi}.
\label{eq:unruh_T}
\end{equation}
This local temperature assignment follows the standard Unruh effect and its horizon-thermodynamic interpretation \cite{Unruh1976,Jacobson1995}.

The matter heat flux crossing the horizon is defined by
\begin{equation}
\delta Q = \int_{\cH} T^{(m)}_{\mu\nu}\chi^\mu\,\dd\Sigma^\nu
= -\kappa\int_{\cH} \lambda\,T^{(m)}_{\mu\nu}k^\mu k^\nu\,\dd\lambda\,\dd A.
\label{eq:heat_flux}
\end{equation}

On the other hand, the variation of the Wald entropy is not merely an area variation, because \(\Ff\) itself also varies:
\begin{equation}
\delta \SH = \frac{1}{4G}\int_{\Sigma}\left(\Ff\,\delta \dd A + \dd A\,\delta \Ff\right).
\label{eq:deltaSH}
\end{equation}
Expressed differentially along the horizon generator, one obtains
\begin{equation}
\frac{\dd \SH}{\dd\lambda}
=\frac{1}{4G}\int_{\Sigma_\lambda}\left(\Ff\,\theta + \dot{\Ff}\right)\dd A,
\label{eq:dS_local}
\end{equation}
where
\begin{equation}
\theta\equiv \nabla_\mu k^\mu, \qquad
\dot{\Ff}\equiv k^\mu\nabla_\mu \Ff.
\label{eq:thetaFdot}
\end{equation}

To ensure that the first-order entropy variation vanishes on the local equilibrium cross section, one usually imposes
\begin{equation}
\left.(\Ff\theta+\dot{\Ff})\right|_{p}=0.
\label{eq:eq_slice}
\end{equation}
This condition is stronger than the Einstein-case requirement \(\theta|_p=0\): it demands that the area expansion and the variation of the scalar degree of freedom compensate each other on the equilibrium slice.

\subsection{Thermodynamic closure: reversible and irreversible parts}

Using the Raychaudhuri equation,
\begin{equation}
\dot\theta = -\frac{1}{2}\theta^2 - \sigma_{\mu\nu}\sigma^{\mu\nu} - R_{\mu\nu}k^\mu k^\nu,
\label{eq:Raychaudhuri}
\end{equation}
together with the field equation, the second-order variation of \eqref{eq:dS_local} can be organized into a ``heat-flux term + purely geometric term'' \cite{Jacobson1995,Padmanabhan2010}. In the \(f(R)\) case, the purely geometric contribution generally cannot be fully recast into \(\delta Q/T\), and one is naturally led to the non-equilibrium entropy balance
\begin{equation}
\dd \SH = \frac{\delta Q}{T} + \diS_{\mathrm{geom}}.
\label{eq:non_eq_balance}
\end{equation}

To elevate this into a genuine thermodynamic model, rather than merely a geometric rewriting of the equations, we further introduce a \emph{positive-definite constitutive closure}:
\begin{equation}
\diS_{\mathrm{irr}}
= \frac{1}{T}\int_{\cH}\dd\lambda\,\dd A
\left[
\zeta\,\Theta_{\mathrm{eff}}^2
+2\eta\,\sigma_{\mu\nu}\sigma^{\mu\nu}
+\chi\,\frac{(k^\mu\nabla_\mu \Ff)^2}{\Ff}
\right],
\qquad
\zeta,\eta,\chi\ge 0,
\label{eq:constitutive_entropy}
\end{equation}
where
\begin{equation}
\Theta_{\mathrm{eff}}\equiv \Ff\theta+\dot{\Ff}.
\label{eq:Thetaeff}
\end{equation}

\begin{remark}
Equation \eqref{eq:constitutive_entropy} is a new constitutive definition proposed here for the gravity-thermodynamized model; it is not explicitly stated in the uploaded preprints. Its physical meaning is:
\begin{itemize}
\item \(\sigma_{\mu\nu}\sigma^{\mu\nu}\) corresponds to shear dissipation;
\item \(\Theta_{\mathrm{eff}}^2\) corresponds to bulk-type or expansion-type irreversible dissipation;
\item \((k\cdot\nabla \Ff)^2/\Ff\) corresponds to internal entropy production induced by the scalaron flux.
\end{itemize}
The advantage of this form is that the second law,
\begin{equation}
\dtotS = \dmS + \dd \SH + \diS_{\mathrm{irr}} \ge 0,
\label{eq:secondlaw_total}
\end{equation}
is manifestly satisfied whenever the constitutive coefficients are non-negative.
\end{remark}

\subsection{Thermodynamic variational principle}

To unify thermodynamization with a variational principle, we define the Euclideanized Massieu-type functional as
\begin{equation}
\cM_{\mathrm{GT}}[g,\cH;\beta]
\equiv
\SH - \beta U_{\mathrm{quasi}} - \beta\,\cW_{\mathrm{irr}},
\label{eq:Massieu}
\end{equation}
where \(\beta=1/T\), \(U_{\mathrm{quasi}}\) is the energy function associated with the chosen quasi-local horizon, and \(\cW_{\mathrm{irr}}\) is the irreversible dissipation functional. Fixing the boundary temperature and external parameters, one imposes
\begin{equation}
\delta \cM_{\mathrm{GT}}=0,
\label{eq:massieu_variation}
\end{equation}
which yields
\begin{equation}
\delta U_{\mathrm{quasi}} = T\,\delta \SH - \delta \cW_{\mathrm{irr}} + \sum_i Y_i\,\delta X_i.
\label{eq:thermo_var_eq}
\end{equation}
Equation \eqref{eq:thermo_var_eq} is what we call the variational principle of thermodynamization of gravity: the field equations guarantee geometric integrability, the Wald entropy guarantees the definition of geometric entropy, and \(\cW_{\mathrm{irr}}\) controls the irreversible corrections away from equilibrium \cite{Wald1993,Padmanabhan2010,ChenSingular2026}.

\subsection{A rigorous reformulation of the residue idea}

The uploaded preprints attempt to relate poles, residues, and horizon thermodynamics through analytically continued functions \cite{ChenSingular2026,ChenTopological2025}. To avoid ambiguity in the physical interpretation of \(\exp(S)\), we suggest implementing the residue idea through the simple pole of the Euclideanized lapse function. For a static, spherically symmetric metric,
\begin{equation}
\dd s^2 = -B(r)\,\dd t^2 + \frac{\dd r^2}{B(r)} + r^2\dd\Omega_2^2,
\label{eq:sss_metric_simple}
\end{equation}
if \(r=r_h\) is a simple horizon, then
\begin{equation}
B(r)=B'(r_h)(r-r_h)+\cO((r-r_h)^2).
\label{eq:Bexpand}
\end{equation}
The absence of a conical singularity in Euclidean time requires
\begin{equation}
\beta_h = \frac{4\pi}{B'(r_h)}.
\label{eq:beta_simple}
\end{equation}
Meanwhile, the simple-pole residue immediately gives
\begin{equation}
\beta_h = 4\pi\,\Res_{r=r_h}\!\left(\frac{1}{B(r)}\right)
= \frac{2}{\ii}\oint_{\Gamma_h}\frac{\dd r}{B(r)}.
\label{eq:residue_temperature}
\end{equation}
That is, \textbf{the temperature is determined by the residue of the simple pole, the entropy is determined by the Wald--Noether charge, and the two are then coupled through the first law} \cite{Wald1993,GibbonsHawking1977,York1986,ChenSingular2026}. This is precisely a rigorous reformulation of the residue-based idea developed in the uploaded works.

\section{Static Spherically Symmetric Black Hole Example}

\subsection{Geometric preliminaries and horizon quantities}

For clarity, in this section we adopt the Schwarzschild gauge
\begin{equation}
\dd s^2 = -B(r)\,\dd t^2 + \frac{\dd r^2}{B(r)} + r^2\dd\Omega_2^2,
\qquad
B(r_h)=0,
\qquad
B'(r_h)\neq 0.
\label{eq:sss_metric}
\end{equation}
The corresponding Ricci scalar is
\begin{equation}
R(r) = -B''(r)-\frac{4B'(r)}{r}-\frac{2(B(r)-1)}{r^2}.
\label{eq:sss_R}
\end{equation}
The horizon temperature, entropy, and volume are defined by
\begin{equation}
\TempH = \frac{B'(r_h)}{4\pi}, \qquad
\SH = \frac{\Ahor\,\Ff_h}{4G}=\frac{\pi r_h^2\Ff_h}{G}, \qquad
\VH = \frac{4\pi r_h^3}{3},
\label{eq:sss_thermo_quantities}
\end{equation}
where \(\Ff_h\equiv \Ff(r_h)\). The entropy expression is the \(f(R)\) specialization of the Wald formula \cite{Wald1993}.

\subsection{The horizon first law derived from the field equations}

Writing the \(r{}_{r}\) component in mixed-index form, one finds
\begin{equation}
\Ff\left(\frac{B'}{r}+\frac{B-1}{r^2}\right)
-\frac{1}{2}B'\Ff'
-\frac{2B}{r}\Ff'
-\frac{1}{2}(R\Ff-f)
=
8\pi G\,P_r,
\label{eq:rr_component}
\end{equation}
where \(P_r\equiv T^r{}_{r}\). At the horizon \(r=r_h\), since \(B(r_h)=0\), this reduces to
\begin{equation}
8\pi G\,P_h
=
\Ff_h\left(\frac{B'_h}{r_h}-\frac{1}{r_h^2}\right)
-\frac{1}{2}B'_h\Ff'_h
-\frac{1}{2}(R_h\Ff_h-f_h).
\label{eq:rr_horizon}
\end{equation}

\begin{proposition}[Horizon first law for static spherically symmetric \(f(R)\) gravity]
Multiplying \eqref{eq:rr_horizon} by \(\dd\VH=4\pi r_h^2\dd r_h\), and introducing
\begin{equation}
\dd \EH \equiv \frac{1}{2G}\left[\Ff_h+\frac{r_h^2}{2}(R_h\Ff_h-f_h)\right]\dd r_h,
\label{eq:dEH_def}
\end{equation}
the field equation can be rewritten as
\begin{equation}
\dd \EH = \TempH\,\dd \SH - P_h\,\dd \VH + \TempH\,\diS_h,
\label{eq:sss_firstlaw}
\end{equation}
where the geometric additional entropy differential is
\begin{equation}
\TempH\,\diS_h = -\frac{r_h^2 B'_h\Ff'_h}{2G}\,\dd r_h.
\label{eq:sss_diS}
\end{equation}
\end{proposition}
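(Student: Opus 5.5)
The plan is a direct algebraic verification: take the horizon-evaluated field equation \eqref{eq:rr_horizon}, convert it into a statement about differentials along a one-parameter family of horizons labelled by \(r_h\), and then sort the terms into \(\dd\EH\), \(\TempH\,\dd\SH\), \(P_h\,\dd\VH\), and a leftover piece. That leftover piece is, by construction, \(\TempH\,\diS_h\).

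\textbf{Step 1: the pressure term.} Multiply \eqref{eq:rr_horizon} by \(\dd\VH/(8\pi G)=r_h^2\,\dd r_h/(2G)\). This gives
\begin{equation*}
P_h\,\dd\VH=\frac{1}{2G}\Big[\Ff_h r_h B'_h-\Ff_h-\tfrac12 r_h^2B'_h\Ff'_h-\tfrac{r_h^2}{2}(R_h\Ff_h-f_h)\Big]\dd r_h .
\end{equation*}

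\textbf{Step 2: the entropy term.} Differentiate \(\SH=\pi r_h^2\Ff_h/G\) from \eqref{eq:sss_thermo_quantities} to get
\begin{equation*}
\dd\SH=\frac{\pi}{G}\big(2r_h\Ff_h+r_h^2\Ff'_h\big)\dd r_h .
\end{equation*}
Multiplying by \(\TempH=B'_h/(4\pi)\) yields
\begin{equation*}
\TempH\,\dd\SH=\frac{1}{2G}\big(r_h\Ff_hB'_h+\tfrac12 r_h^2B'_h\Ff'_h\big)\dd r_h .
\end{equation*}

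\textbf{Step 3: matching terms.} Compare the two expressions. The terms \(-\Ff_h-\tfrac{r_h^2}{2}(R_h\Ff_h-f_h)\) in \(P_h\,\dd\VH\) are exactly \(-\dd\EH\) as defined in \eqref{eq:dEH_def}. The term \(\Ff_hr_hB'_h/(2G)\) appears in both \(P_h\,\dd\VH\) and \(\TempH\,\dd\SH\). The \(\Ff'_h\) terms do not cancel: \(P_h\,\dd\VH\) contains \(-\tfrac12 r_h^2B'_h\Ff'_h/(2G)\), while \(\TempH\,\dd\SH\) contains \(+\tfrac12 r_h^2B'_h\Ff'_h/(2G)\). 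Hence
\begin{equation*}
P_h\,\dd\VH=\TempH\,\dd\SH-\frac{r_h^2B'_h\Ff'_h}{2G}\dd r_h-\dd\EH .
\end{equation*}
Rearranging gives \eqref{eq:sss_firstlaw}, with \(\TempH\,\diS_h\) identified as in \eqref{eq:sss_diS}. As a consistency check, in the Einstein limit \(\Ff'_h=0\) the extra term vanishes and the usual horizon first law is recovered.

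\textbf{Main obstacle.} The delicate step is the meaning of \(\Ff'_h\) in Step 2. In \eqref{eq:rr_horizon}, \(\Ff'_h\) is the radial derivative \(\partial_r\Ff\) of a single solution, evaluated at \(r_h\). In \(\dd\SH\), however, one needs the total derivative \(\dd\Ff_h/\dd r_h\) along a family of solutions. I would state explicitly that the differentials are taken in the Padmanabhan-type sense: a virtual displacement of the horizon within the given profile, so that \(\dd\Ff_h=\Ff'(r_h)\,\dd r_h\). Under that convention the identification is exact and the proposition is an identity. If instead one wants a genuine family of solutions, the difference \(\dd\Ff_h/\dd r_h-\Ff'(r_h)\) must be absorbed into \(\diS_h\), and I would note this as a caveat rather than try to prove that it vanishes.
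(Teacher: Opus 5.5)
Your algebra is correct and follows the paper's proof essentially line by line. You use the same expressions for $P_h\,\dd\VH$ (obtained from \eqref{eq:rr_horizon}) and $\TempH\,\dd\SH$ and the same regrouping, and your caveat $\dd\Ff_h=\Ff'(r_h)\,\dd r_h$ makes explicit a convention the paper uses silently in \eqref{eq:dSHcalc}. One caution that neither proof addresses: the proposition is valid only as an identity relative to \eqref{eq:rr_horizon} as written, because \eqref{eq:rr_component} appears to carry sign errors. From \eqref{eq:fRfield} one has $\Box\Ff-\nabla^r\nabla_r\Ff=\tfrac12B'\Ff'+\tfrac{2B}{r}\Ff'$ and $\Ff R^r{}_r=\Ff G^r{}_r+\tfrac12R\Ff$, so these terms and $\tfrac12(R\Ff-f)$ enter the $rr$ equation with plus signs, not the minus signs of \eqref{eq:rr_component}; with those signs the $\Ff'_h$ terms cancel, leaving the equilibrium relation $\TempH\,\dd\SH-P_h\,\dd\VH=\frac{1}{2G}\bigl[\Ff_h-\frac{r_h^2}{2}(R_h\Ff_h-f_h)\bigr]\dd r_h$ with no $\diS_h$.
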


\begin{proof}
First, from \eqref{eq:sss_thermo_quantities} one computes
\begin{equation}
\dd \SH = \frac{\pi}{G}\left(2r_h\Ff_h+r_h^2\Ff'_h\right)\dd r_h,
\label{eq:dSHcalc}
\end{equation}
hence
\begin{equation}
\TempH\,\dd \SH
=
\frac{1}{4G}\left(2r_h\Ff_hB'_h+r_h^2B'_h\Ff'_h\right)\dd r_h.
\label{eq:TdS_calc}
\end{equation}
Next, using \eqref{eq:rr_horizon} to eliminate \(P_h\dd V_h\), one obtains
\begin{equation}
P_h\,\dd V_h
=
\frac{1}{2G}
\left[
\Ff_h(B'_hr_h-1)
-\frac{1}{2}r_h^2B'_h\Ff'_h
-\frac{1}{2}r_h^2(R_h\Ff_h-f_h)
\right]\dd r_h.
\label{eq:PdV_calc}
\end{equation}
Substituting \eqref{eq:TdS_calc} and \eqref{eq:PdV_calc} into the desired expression and rearranging terms yields \eqref{eq:sss_firstlaw}, where \(\dd E_h\) and \(\diS_h\) are respectively defined by \eqref{eq:dEH_def} and \eqref{eq:sss_diS}.
\end{proof}

\begin{remark}
If \(f(R)=R-2\Lambda\), then \(\Ff=1\) and \(\Ff'_h=0\), so \eqref{eq:sss_diS} vanishes, and \eqref{eq:sss_firstlaw} reduces to the equilibrium first law at the horizon in Einstein gravity. In other words, \(\Ff'_h\neq 0\) is the most direct signal that \(f(R)\) induces a non-equilibrium contribution \cite{Jacobson1995,Padmanabhan2010}.
\end{remark}

\subsection{Thermodynamical interpretation}

The quantity \(\diS_h\) in \eqref{eq:sss_firstlaw} admits two readings:
\begin{enumerate}[label=(\alph*)]
\item it may be regarded as a \textbf{geometric additional term} induced by higher-derivative geometry;

\item in a coarse-grained description, it may be mapped onto the \textbf{irreversible entropy production} in \eqref{eq:constitutive_entropy}.
\end{enumerate}
The second interpretation is more suitable for the language of thermodynamization of gravity: in that case, \(\Ff\) is no longer merely the derivative of the Lagrangian, but becomes an internal variable in spacetime thermodynamics. Its radial gradient \(\Ff'_h\) characterizes the scalar thermodynamic inhomogeneity near the horizon.

\subsection{FRW Cosmological Example: Rewritten Friedmann Equations}

Consider the FRW metric
\begin{equation}
\dd s^2 = -\dd t^2 + a(t)^2\left[\frac{\dd r^2}{1-kr^2}+r^2\dd\Omega_2^2\right],
\label{eq:FRW_metric}
\end{equation}
where \(H=\dot a/a\), \(k=0,\pm1\), and the Ricci scalar is
\begin{equation}
R = 6\left(2H^2+\dot H + \frac{k}{a^2}\right).
\label{eq:FRW_R}
\end{equation}
For a perfect fluid \(T^\mu{}_{\nu}=\mathrm{diag}(-\rho,p,p,p)\), the field equations give
\begin{align}
3\Ff\left(H^2+\frac{k}{a^2}\right)
&=8\pi G\rho + \frac{1}{2}(\Ff R-f)-3H\dot{\Ff},
\label{eq:Friedmann1_fR}\\
-2\Ff\left(\dot H-\frac{k}{a^2}\right)
&=8\pi G(\rho+p)+\ddot{\Ff}-H\dot{\Ff}.
\label{eq:Friedmann2_fR}
\end{align}
Meanwhile, matter conservation reads
\begin{equation}
\dot\rho+3H(\rho+p)=0.
\label{eq:matter_conservation}
\end{equation}

\subsection{Thermodynamic quantities at the apparent horizon}

The apparent-horizon radius in FRW spacetime is defined by
\begin{equation}
\rA = \frac{1}{\sqrt{H^2+k/a^2}},
\label{eq:rA_def}
\end{equation}
which satisfies
\begin{equation}
\dot\rA = -H\rA^3\left(\dot H-\frac{k}{a^2}\right).
\label{eq:rA_dot}
\end{equation}
The area, entropy, volume, and work density at the apparent horizon are defined by
\begin{equation}
A_A = 4\pi \rA^2,\qquad
\SA = \frac{A_A\Ff}{4G}=\frac{\pi \rA^2\Ff}{G}, \qquad
V_A=\frac{4\pi\rA^3}{3}, \qquad
W=\frac{\rho-p}{2}.
\label{eq:FRW_thermo_quantities}
\end{equation}
The apparent-horizon temperature is taken to be
\begin{equation}
\TA = \frac{|\kappa_A|}{2\pi}, \qquad
\kappa_A = -\frac{1}{\rA}\left(1-\frac{\dot\rA}{2H\rA}\right).
\label{eq:TA_def}
\end{equation}

\subsection{Non-equilibrium form of the first law in FRW thermodynamics}

Define the matter energy inside the horizon by
\begin{equation}
\EA = \rho V_A.
\label{eq:EA_def}
\end{equation}
Using \eqref{eq:matter_conservation}, one obtains
\begin{equation}
\dd \EA = 4\pi\rA^2\rho\,\dd\rA - 4\pi\rA^3H(\rho+p)\,\dd t.
\label{eq:dEA}
\end{equation}
Combining \eqref{eq:Friedmann1_fR}--\eqref{eq:Friedmann2_fR}, \eqref{eq:rA_dot}, \eqref{eq:FRW_thermo_quantities}, and \eqref{eq:dEA}, the field equations can be rewritten as
\begin{equation}
\dd \EA = \TA\,\dd \SA + W\,\dd V_A + \TA\,\diS_A,
\label{eq:FRW_firstlaw_noneq}
\end{equation}
where
\begin{equation}
\TA\,\diS_A \equiv \dd \EA - \TA\,\dd \SA - W\,\dd V_A.
\label{eq:FRW_diS_residual}
\end{equation}

\begin{remark}
Equation \eqref{eq:FRW_diS_residual} \emph{defines} the additional entropy term as the residual geometric contribution after rewriting the first law. The advantage of this construction is that it is fully rigorous and does not presuppose any specific constitutive model. If one further adopts an equilibrium redefinition scheme, in which the geometric corrections are absorbed into an effective dark-component energy-momentum tensor, then \eqref{eq:FRW_firstlaw_noneq} can be rewritten into an equilibrium form without \(\diS_A\). This distinction between equilibrium and non-equilibrium horizon thermodynamics is consistent with the general emergent-gravity logic \cite{Jacobson1995,Padmanabhan2010}.
\end{remark}

\subsection{The second law in FRW and the dual equilibrium/non-equilibrium descriptions}

If the temperature of the cosmic matter inside the horizon is assumed to be equal to the temperature of the apparent horizon, and the geometric additional term is coarse-grained into a positive-definite constitutive contribution, then the total entropy production rate may be written as
\begin{equation}
\frac{\dd S_{\mathrm{tot}}}{\dd t}
=
\frac{\dd}{\dd t}\left(S_m+\SA\right) + \frac{\dd S_{i,A}}{\dd t}
\ge 0.
\label{eq:FRW_secondlaw}
\end{equation}
On the other hand, one may also adopt an equilibrium description by rewriting the horizon entropy into an area-type entropy and absorbing the non-equilibrium contribution into a redefined dark fluid. Both descriptions are established in the literature, but the origins of entropy production are not exactly the same. It is therefore conceptually preferable to distinguish between the local Rindler non-equilibrium term and the FRW horizon rewriting term.

\section{Discussion}

The present construction integrates the main ideas of the uploaded works into a single non-equilibrium perspective \cite{ChenTopological2025,ChenSingular2026}. The entropy-functional criterion provides a background-selection rule, the residue formalism makes the thermodynamic role of the horizon singularity explicit, and the topological index captures the net orientation of thermodynamic branches. Several observations follow.

First, the residue calculus is robust because it depends only on the local singularity class of the lapse function \cite{GibbonsHawking1977,York1986,ChenSingular2026}. This makes the formalism relatively insensitive to many details of the bulk geometry away from the horizon. Such insensitivity is useful when one aims to isolate universal aspects of black-hole thermodynamics.

Second, the non-equilibrium extension is deliberately quasi-stationary. A fully time-dependent, strongly non-equilibrium black hole requires the full dynamical metric and probably a real-time Schwinger--Keldysh-type formulation. Our framework instead targets slowly driven states for which an instantaneous horizon and an instantaneous contour are still meaningful. In that regime, Eqs.~\eqref{eq:lnZneq} and \eqref{eq:entropybalance} provide a transparent extension of equilibrium thermodynamics.

Third, the topological classification suggests a protected structure. As long as the number and orientation of thermodynamic branches do not change, the index \(W\) is unchanged even in the presence of weak \(f(R)\) corrections or weak dissipative driving \cite{YerraBhamidipati2022,ChenTopological2025}. This echoes the intuition that topological data are more robust than local response coefficients.

Finally, the explicit function plots should be interpreted correctly. They are not substitutes for a numerical solution of the non-equilibrium \(f(R)\) field equations. Rather, they serve as analytically controlled visualizations of the formalism: the free-energy model displays branch lifting, the temperature plot shows dissipative dressing of the horizon response, and the entropy-production curve emphasizes the irreversible sector. In a broader context, this remains compatible both with microscopic entropy approaches \cite{StromingerVafa1996,Strominger1998} and with thermodynamic-geometry interpretations of black-hole microstructure \cite{Ruppeiner1995,WeiLiuMann2019}.

\subsection{Conclusion}

We have formulated the non-equilibrium physics of thermodynamicized black holes by synthesizing the methods of entropy-functional background selection, contour-residue thermodynamics, and topological horizon classification. The central results are as follows:
\begin{enumerate}[label=(\arabic*)]
    \item the equilibrium horizon temperature remains encoded by a simple-pole residue, \(\beta_{h}=4\pi\,\Res(1/f)\);
    \item a quasi-stationary non-equilibrium state is described by a generalized singular action containing reversible work terms and an irreversible contribution \(\Pi_{h}\);
    \item the entropy balance law acquires the natural form of a reversible first-law sector plus a positive entropy-production term \(\dot S_{\mathrm{irr}}\);
    \item for Kerr--Newman-type black holes in constant-curvature \(f(R)\) gravity, the entropy continues to be weighted by \(f'(R_{0})\), while the non-extremal topological class remains \(W=0\) unless the horizon structure itself changes.
\end{enumerate}
The framework is suitable for further generalization to multi-horizon de Sitter black holes, rotating AdS backgrounds, or transport-theoretic derivations of the dissipative coefficients. It may also provide a useful bridge between residue-based horizon thermodynamics and more microscopic descriptions of black-hole non-equilibrium dynamics.

\end{document}